\newtheorem{Remark}{Remark}
\newcommand{\limprob}{\xrightarrow{P}}
\newcommand{\limdist}{\xrightarrow{D}}
\newcommand{\chords}{\textit{chords}}
\newcommand{\gauss}[1]{\mathcal{N}\left(#1\right)}
\newcommand{\Naive}{Na\"{\i}ve}
\newcommand{\set}[1]{\left\{ #1 \right\}}
\title{\textbf{Modeling and Analyzing Respondent-Driven Sampling as a Counting Process}}
\author{Yakir Berchenko\\University of Cambridge,\\Madingley Road, Cambridge CB3 0ES, United Kingdom\\and\\ Biostatistics Unit,\\Gertner Institute for Epidemiology and Health Policy Research,\\Tel Hashomer 52621, Israel\\(Corresponding author: {\tt byakir@gmail.com})
\and Jonathan Rosenblatt\\Department of Computer Science and Applied Mathematics,\\ 
Weizmann Institute of Science,\\
Rehovot, Israel
\and Simon D.W. Frost\\Department of Veterinary Medicine,\\University of Cambridge,\\Madingley Road, Cambridge CB3 0ES, United Kingdom}
\begin{document}

\maketitle

\begin{abstract}
Respondent-driven sampling (RDS) is an approach to sampling design and analysis which utilizes the networks of social relationships that connect members of the target population, using chain-referral methods to facilitate sampling. 
RDS will typically lead to biased sampling, favoring participants with many acquaintances. 
\Naive\ estimates, such as the sample average, which are uncorrected for the sampling bias, will themselves be biased towards the state of the most highly connected individuals.
To compensate for this bias, current methodology suggests inverse-degree weighting, where the ``degree'' is the number of acquaintances. 
This stems from the fundamental RDS assumption that the probability of sampling an individual is proportional to their degree.
Since this assumption is tenuous at best, we propose to harness the additional information encapsulated in the time of recruitment, into a model-based inference framework for RDS.
This information is typically collected by researchers, and ignored. 
We adapt methods developed for inference in epidemic processes to estimate the population size, degree  counts and frequencies.
While providing valuable information in themselves, these quantities ultimately serve to debias other estimators, such a disease's prevalence.
A fundamental advantage of our approach is that, being model-based, it makes all assumptions of the data-generating process explicit. 
This enables verification of the assumptions, maximum likelihood estimation, extension with covariates, and model selection.
We develop asymptotic theory, proving consistency and asymptotic normality properties. 
We further compare these estimators to the standard inverse-degree weighting through simulation, and using real-world data. 
In both cases we find our estimators to outperform current methods. 
The likelihood problem in the model we present is coordinatewise convex, and thus efficiently solvable. 
We implement these estimators in an accompanying R package, \chords, available on CRAN. 
\end{abstract}

\section{Introduction}
\label{sec:introduction}

Marginalized populations often suffer a disproportionate burden of infectious disease. Yet the hard-to-reach or hidden nature of these populations makes them difficult to sample, limiting our knowledge of the very groups for which monitoring and prevention should be a priority. 
Respondent-driven sampling (RDS) is an approach to sampling design that is increasingly  used to study marginalized or highly stigmatized groups, such as injection drug users, sex workers, and men who have sex with men \citep{Heckathorn1997,Heckathorn2002}. 
RDS overcomes the hidden nature of these populations by utilizing the networks of social relationships that connect their members together, to facilitate sampling by chain-referral methods. 
Seeds are selected --- usually by convenience --- from the target population, and given coupons. 
They use these coupons to recruit others, who themselves become recruiters. 
Recruits are given an incentive, usually money, for taking part in the survey, and also for recruiting others. 
This process continues in recruitment waves until the survey is stopped, usually when a target sample size is reached. 
Estimation methods are then applied which account for the nonuniform sample selection, in an attempt to generate unbiased estimates of the composition of the population.  
Following its introduction  \citep{Heckathorn1997,Heckathorn2002}, RDS has quickly become popular. Major public health agencies, such as the World Health Organization and the Centers for Disease Control and Prevention, now rely on it for HIV surveillance activities, such as prevalence estimation.

As pointed out by \citet{SalganikCom}, RDS is a package consisting of two distinct components, namely a sampling method and a method of statistical inference. 
With some notable exceptions, the sampling method has often been found to be cost-effective for attaining a target sample size, and the popularity of RDS has led to a wealth of new data \citep{Malekinejad2008}. 
However, the assumptions and performance of the second, inferential, component of the RDS package are far more vulnerable to criticism \citep{crawford2015hidden,guntuboyina2012impossibility}.

The most fundamental inferential problem in RDS is biased sampling, such as oversampling participants with many acquaintances. This biases population estimates towards the state of these highly connected participants.
The best way to address biased sampling is to  stratify the sample into different degree classes; this is done by measuring each participant's degree, and estimating the prevalence, $H$, as a weighted average,

\begin{equation} \label{Hstrat}
\widehat{H} = \sum_{k\geq 1} f_{k}\widehat{p_k}
\end{equation}
where $\{f_k\}_{k\geq 1}$ is the degree distribution of the population, and $\widehat{p_k}$ is the estimator of  $p_k$, the prevalence within degree class $k$.
This is not usually possible, however, because the real degree distribution of the population, $f_k$, is not known; moreover, when using only the observed degrees, or order of recruitment, the degree distribution is not identifiable. 
Denoting the degree-dependent sampling probability of an individual by $\pi_k$, it is possible to estimate only the probability of recruitment ($ \pi_k f_k$), but not $f_k$ and $\pi_k$ separately.
Current RDS studies attempt to rectify this problem by modeling recruitment as a \emph{homogenous} random walk, which culminates in the assumption that the sampling probability is proportional to degree, i.e., $ \pi_k \propto k$. 
Under this assumption the current inverse-degree weighting is in fact an inverse-probability estimator, i.e., a Horvitz--Thompson estimator \citep{Horvitz_HT}.

We hypothesize, however, that it is highly unlikely that the recruitment probability is indeed proportional to degree. 
Other means should thus be invoked to restore the identifiability of the degree frequencies. 
Moreover, information on the observed degree distribution, by itself, does not allow us to test this assumption.  
Fortunately, most RDS studies obtain additional valuable information which is usually discarded: the precise timing of recruitment\footnote{Readers interested in size-biased sampling without replacement, where the sampling time is not known and only the \emph{order} of sampling is known, should consult the seminal works of \citet{Gordon1993} and \citet{NairWang1992}.
}. 
This timing information allows us to identify $f_k$ and $\pi_k$. 
We harness it by modeling recruitment as a continuous-time counting process, and utilize the established machinery \citep{Andersen93} which has been applied, for example, to survival analysis in stochastic epidemic models \citep{AB2000} and software reliability \citep{VanPul}.

Our approach, discarding the homogenous random-walk model in favor of a stochastic epidemic model, is a very natural one. The recruitment process is akin to the spread of an epidemic in a population; hence, why not model it as one? This is particularly promising since it involves linking RDS to a larger, more developed corpus of literature \citep{AB2000, Britton98, Rida91, Becker89}.

In Section~\ref{sec:results}, after introducing our new model for RDS, we discuss the related literature of epidemiological modeling and inference, as well as certain related models dealing with inference for software reliability. 
We derive the associated maximum likelihood estimators (MLEs) for our model, and demonstrate their utility empirically, using simulations and real-life data. 
Some large-sample properties, such as consistency and asymptotic normality, follow after the appropriate technical preliminaries.
We end with a discussion emphasizing the benefits of our model-based approach to RDS, and some future research directions, in Section \ref{sec:discussion}.

\section{Results}
\label{sec:results}
We begin by introducing our new statistical model. We then derive the associated MLEs, and discuss their properties.
Our notation here attempts to maintain compatibility with both epidemiological modeling \citep{AB2000} and the theory of inference for continuous-time counting processes \citep{Andersen93, VanPul}. Minor unavoidable clashes are resolved as explained below.

\subsection{Problem Setup} 
\label{model}
Our approach for modeling RDS assumes the following:
\begin{description}
  \item[(M1)] The size of the population, $N$, is not known, although we may assume it is very large.
  \item[(M2)] For each degree, $k$, there are $N_k$ individuals in the population with degree $k$.
  \item[(M3)] Sampling is done without replacement, with $n_{k,t}$ being the (right-continuous) counting process representing the number of people with degree $k$ recruited by time $t$.
  \item[(M4)] Between times $t$ and $t+\Delta t$, an individual with degree $k$ is sampled with probability
\begin{equation}
\label{eq:intensity}
 \lambda_{k,t} =  \frac{\beta_k}{N}I_t(N_k - n_{k,t})\Delta t + o(\Delta t)
\end{equation}
where $I_t$ is the number of people already recruited and actively trying to recruit new individuals, and the constant $\beta_k$ is a degree-dependent recruitment rate.

\end{description}
Using $g_t^-$ to denote\footnote{Hopefully less cumbersome than the alternative common notation, $g_{t-}$.} the value of $g$ just before $t$, a more formal statement of (M4) is:
\begin{description}
  \item[(M4')] The multivariate counting process $n_t:=(n_{1,t},n_{2,t},..n_{d_{max},t})$ has intensity
\begin{equation} 
\label{intense}
	\lambda_t:=\bigg(\frac{\beta_1}{N}I_{t}^-(N_1 - n_{1,t}^-), \frac{\beta_2}{N}I_t^-(N_2 - n_{2,t}^-),\dots, 		\frac{\beta_{d_{max}}}{N}I_t^-(N_{d_{max}} - n_{d_{max},t}^-)\bigg)
\end{equation}
such that, $m_t:=n_t - \int_0^t \lambda_s ds$ is a multivariate martingale (and clearly, $\lambda_t$ is predictable, i.e., nonstochastic given the past).
\end{description}

\paragraph{Relation to other domains.}
The similarity of Eq.(\ref{intense}) to the widespread Susceptible-Infected-Removed (SIR) epidemiological model \citep{AB2000} is quite striking. In the simplest version of the SIR model\footnote{Most of the more elaborate epidemiological models could also be adapted for RDS. For example, it is also possible to consider the case where the probability of a person recruiting new individuals is proportional to his degree. In this case we need to replace $I_t$ in Eq.(\ref{intense}) with $\tilde{I_t}$, which is the number of edges sampled so far; i.e., if $x_t$ is the observation at time point $t$, with $x_t = 0$ if no one was sampled and $x_t = k$ otherwise ($k$ being the degree of the sampled individual), then $\tilde{I_t} = \int (x_t)dt - I_t$. An even more general recruitment mechanism is considered by \citet{Britton98}, addressing contagion and estimation in multitype epidemics.}, the susceptible set, $S$, is depleted at a rate, $dS$, proportional to its size and the size of the infected set, $I$: $dS = -\beta ISdt$. 
Thus, in RDS the ``inviting'' set is analogous to the infectious set in standard epidemiological modeling. 
Similarly, \mbox{$N-n$} is the analog of the susceptible set, $S$. However, previous epidemiology-related works \citep{Rida91,Britton98,AB2000} have usually focused on the transmission parameters ($\beta_k$'s in our model), which are the least relevant to our application. As such they also assume knowledge of the degree distribution, which in our case is not only unknown, but actually one of the main objects of interest.

These epidemiological model features are complemented by certain models dealing with inference in the field of  software reliability \citep{VanPul,JelinskiMor}. In particular, the Jelinski--Moranda model assumes that a computer program has an unknown number of bugs, $N$, which are detected at a rate proportional to the number of remaining (undetected) bugs; i.e., the rate of detection for the $i^\mathrm{th}$ bug is $\lambda_i = \beta (N-(i-1))$. In this case, the motivation and approach for estimating $N$ is more akin to RDS \citep{VanPul}. 
Two key differences remain.
The first is that the Jelinski--Moranda model is a special case of Eq.(\ref{intense}) with $I_t\equiv N$, whereas in RDS $I_t$ is more general, and depends on the number of individuals detected (\S\ref{asy}).
The second is that, unlike the Jelinski--Moranda model, RDS is multivariate. This is further exacerbated by the fact that often the $N_k$'s themselves are nuisance parameters, required for stratification further down the road.

\paragraph{Short summary of the main results.}
For the model introduced above we prove the following main results:
\begin{enumerate}
\item The MLE for $N_k$, denoted $\hat{N}_k$, is given by the (unique) solution to
\begin{equation}
\label{eq:Nknumeric}
 \sum_{i=0}^{n_{k,\tau}-1} \frac{1}{\hat{N}_k - i} = \frac{n_{k,\tau} \int_0^\tau I_t dt }{\hat{N}_k \int_0^\tau I_t dt - \int_0^\tau n_{k,t} I_t dt}
\end{equation}
where $\tau$ is the duration of the RDS survey. 
\item The MLE for $N_k$ is asymptotically consistent and jointly multivariate normal for all $k$'s. 
\item The consequential  MLE for prevalence is asymptotically consistent and normally distributed.
\end{enumerate}

We now demonstrate the empirical performance of our method (\S\ref{sec:empirical}), followed by an asymptotic analysis (\S\ref{sec:asymptotics}).

\subsection{Empirical Results}
\label{sec:empirical}

We have implemented our estimator in the \chords\ R package, which is available from CRAN. 
We use it to examine our method using simulations, as well as a real RDS dataset.

\subsubsection{Numerical implementation}
\label{sec:numerics}

Eq.(\ref{eq:Nknumeric}) demonstrates that the MLEs for each $N_k$ are separable, and the likelihood is coordinatewise convex.  
This allows a simple line search to be used in order to find $\hat{N}_k$.
As a result, the implementation is very fast: most of the effort is dedicated to careful summation of $\int_0^\tau I_t dt$ and $\int_0^\tau n_{k,t} I_t dt$, prior to a coordinatewise line search. 
The efficiency of the estimation allows computationally intensive approaches such as bootstrapping and cross-validation to be applied.

In contrast to other methods \citep{Gile11}, we do not need to make any assumptions about the true degree distribution or population size. If these are available, however, we can easily incorporate them via a Bayesian prior, thereby regularizing the estimation.

\subsubsection{Simulations}
\label{sec:simulation}

We tested our method and compared it via simulations (Table~\ref{tab:MAE} and Figure~\ref{fig:boxes}) to the successive sampling (SS) estimator of \citet{Gile11}, a state-of-the-art multistage inverse-degree weighting estimator. 
We then apply it to a real-world RDS dataset.

We generated data using different generative processes, which differ in the effect of the degree on the probability of being sampled. In terms of Eq.(\ref{eq:intensity}) --- or more precisely, Eq.(\ref{Xjfun}), below --- we set $\beta_j := j^{\theta}$ for $\theta \in \set{0,0.5,1}$.
Setting $\theta=1$ corresponds to the implied model in inverse-degree weighting, $\theta=0$ corresponds to a degree-\emph{independent} model, and $\theta=0.5$ indicates a weak degree dependence.
We also generated data from a misspecified model where the recruitment rate, $\beta_j$, changes randomly after each sample following the law $\gauss{0,j^2}$.
For all $\theta$, we set the population size, $N$, to 2,000, evenly split between two degrees: $N_2=1$,000 and $N_{10}=1$,000.
In each setup, we generated $300$ samples, each of size 1,000.

We then analyzed these RDS samples using our method and the SS estimator.
The SS estimator requires a population size prior ($\tilde{N}$). We tried a correct prior, $\tilde{N}=2$,000, and an incorrect one, $\tilde{N} = 4$,000.

\begin{table*}[h]
\centering
\begin{tabular}{|c|c|c|c|}
    \hline
         & SS, prior $\tilde{N} = 4$,000  & SS, prior $\tilde{N} = 2$,000 & \chords \\
         \hline \hline 
     $\theta=0$ & 0.22 (0.28)  &  0.28 (0.22) & 0.5 (0.04) \\
     $\theta=0.5$ & 0.31 (0.19) & 0.38 (0.12) & 0.5 (0.06) \\
     $\theta=1$ & 0.43 (0.07) & 0.5 (0.01) & 0.56 (0.11) \\
     Misspecified & 0.33 (0.17) & 0.4 (0.1) & 0.52 (0.07)  \\
    \hline
  \end{tabular}
  \caption{Estimation of the fraction of individuals with degree 10 ($f_{10} = 0.5$). The average estimate is presented in each case, with mean absolute error (MAE) in parentheses.\label{tab:MAE}}
\end{table*}

Table~\ref{tab:MAE} presents $\hat{f}_{10}$, i.e., the estimated frequency of degree $10$, where $f_{10}=0.5$.
This is motivated by estimation of the prevalence of a disease that is mostly present in individuals with high degree.
As can be seen, unless the inverse-degree weighting assumption holds ($\theta=1$), the SS estimator of $f_{10}$ is biased, whereas our estimator is not, even in the misspecified case. 
When $\theta=1$, the \emph{standard} inverse-degree weighting estimator also has similar performance (data not shown).

\begin{figure}[h]
\centerline{
	\includegraphics[width=10cm]{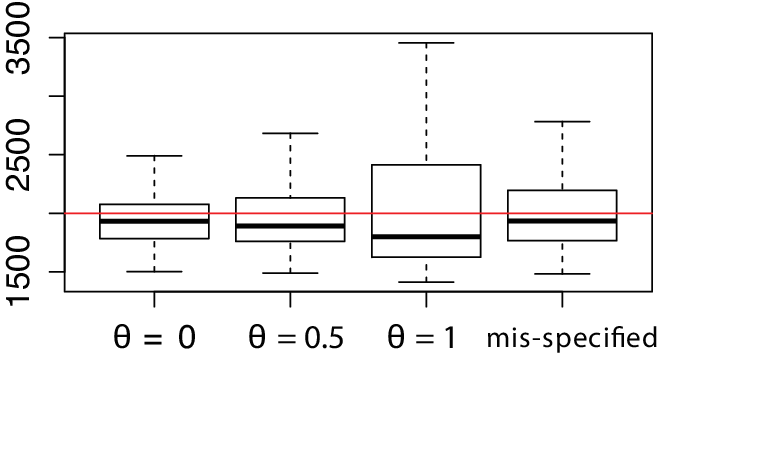}
	}
	\caption{Distributions of estimated population size, using \chords\ with different values of the generative parameter $\theta$. True $N$ equals 2,000; sample size is 1,000; number of replications is 300.
	\label{fig:boxes}
	}
\end{figure}

Figure~\ref{fig:boxes} presents the estimated population size, $\hat{N}$. 
We do not compare it to the SS estimator since, when using SS, $N$ is a required input and cannot be estimated. 
It can be seen that, in the various sampling regimes we tested, we are able to estimate the underlying population size quite accurately.

\subsubsection{Real-world RDS dataset analysis} 
\label{rwRDS} 

In general, RDS is applied to populations where the true size ($N$) and composition is not known, and so real-world datasets that allow us to evaluate our method are scarce. 
We analyzed one such dataset, the WebRDS data, which was made up of a web-based respondent-driven sample of students from a large US university, conducted in 2008 \citep{Wejnert2009}. 
As part of an evaluation of RDS methodology, an RDS survey was carried out on this population, recruiting 378 participants. 
In addition, official institutional records were used (see \citep{Wejnert2009}) to obtain the effective number of students available for recruitment: $N \approx 11$,500. 
Because there is no additional data available for the degree distribution, such as a simple random sample, we used the network from the ``Facebook100'' dataset \citep{Facebook100}, providing a snapshot of the Facebook friendship networks in the university from September 2005. Although this is far from a perfect proxy, we feel the web-like nature of the Facebook friendship network, which is similar to the WebRDS connections, should provide a good approximation to the real degree distribution --- in particular, by having a large mean degree of $84.8$, a standard deviation of $86$, and a heavy tail (Fig.~\ref{fig:USuni}, black line).

There are two difficulties associated with the WebRDS data. 
First, the observed degree distribution is very heterogeneous, and reaches a maximum degree of 1,000, resulting in a small number of recruits per degree class (despite the apparent heaping, where the reported degrees are clustered at 10s, 50s and 100s; Fig.~\ref{fig:USuni}, gray line). 
Second, \citet{Wejnert2009}, who was not interested in the temporal aspects of the process, changed the recruitment scheme midway. As the author acknowledged, this resulted in a change to the recruitment rate, further exacerbated by reaching spring break. Therefore, we first used bins of size $40$ for degree (which was indicated to be optimal by having the best likelihood and AIC score\footnote{Another advantage of our framework is that it lends itself well to a well-established, systematic, and objective means for model selection, such as the AIC. More details on the benefits and possible limitations are given in Section~\ref{sec:future_research}.}) and Jeffrey's prior for regularization \citep{firth1993bias}. 
We then applied our method, and despite the temporal inconsistencies in recruitment, we found a reasonable estimate of $\widehat{N} = 3$,410. 
Moreover, when examining the cumulative degree distribution, the fit with the surrogate degree distribution was surprisingly accurate. 
Figure~\ref{fig:USuni} demonstrates that the inverse-degree-weighted estimator overestimates the fraction of \emph{low} degrees.
Even the observed degree distribution is a better fit than inverse-degree, although it overestimates the fraction of \emph{high} degrees. 
Our new method lies in between the two, adhering to the true distribution.

\begin{figure}[h]
\centerline{
	\includegraphics[width=13cm]{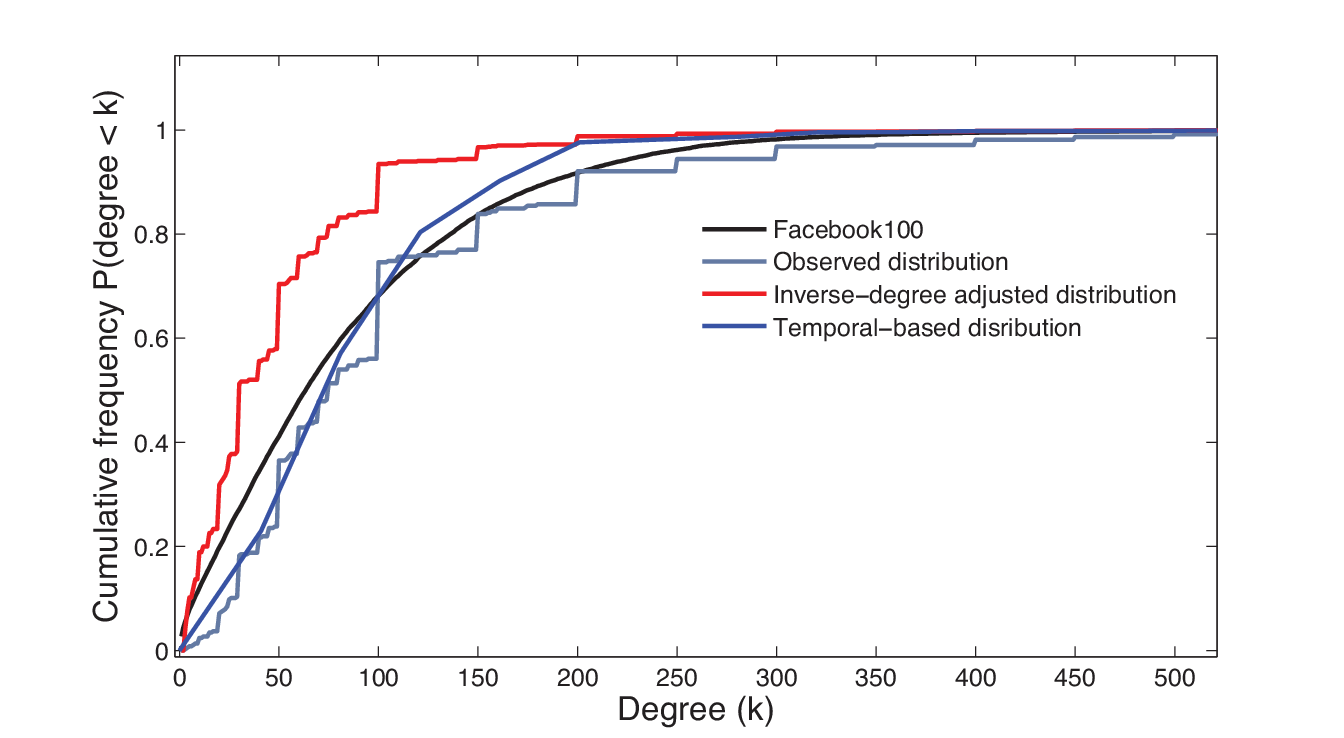}
	}
\caption{Cumulative degree distributions of students. The ``true'' distribution (Facebook100 surrogate) is shown in black. Observed CDF is in gray. The maximum likelihood estimate, calculated with \chords, is in blue. The inverse-degree-weighted estimate is in red. \label{fig:USuni}  }
\end{figure}

\subsection{Asymptotic Analysis}
\label{sec:asymptotics}

A model with an intensity like (M4') usually leads to the following textbook likelihood function~\citep{Andersen93, Aalen1978}:
\begin{equation} 
	\label{like}
	L(\vec{\beta}, \vec{N} ; t) := \exp \Bigg[ \int_0^t \sum_k \log \lambda_{k,t}(\beta_{k}, N_{k})dn_{k,t} -  \int_0^t \sum_k \lambda_{k,t}(\beta_k, N_k)dt\Bigg].
\end{equation}
The large-sample behavior of maximum likelihood estimators in this setup requires some technical preliminaries (\S\ref{asy}).
After these are specified, we provide the necessary details from Kurtz's theory of density-dependent processes, and demonstrate convergence of the counting process to a deterministic function (\S\ref{asycon}). 
These aspects are then put together in Section~\ref{asyT}, for our large-sample theorems. Proofs are provided in the Appendix.

\subsubsection{Technical preliminaries} 
\label{asy}

In our model, the number of individuals, $N$, is one of the parameters. As discussed previously in similar settings (see \citep[p.~430]{Andersen93}, and \citep{VanPul}), it is obviously trivial to conduct a large-sample analysis by considering a sequence of models with $N$ fixed. A more informative large-sample situation is one in which there are many individuals in each degree class.
We therefore consider a sequence of RDS models, indexed by $v$, and by introducing a dummy variable, $f_k$, we let $v f_k$ denote the size of each degree class, $N_k$. In particular, the sequence of counting processes, $n^v_t$, is the multivariate collection of the univariate processes $n^v_{k,t}$, with $v\rightarrow \infty$. 
Similarly, the intensities, Eq.(\ref{intense}), and the likelihood, Eq.(\ref{like}), are indexed mutatis mutandis.
Now we consider the estimation of the $f_k$'s and $\beta_k$'s, as $v\rightarrow\infty$. The result can be later rephrased in terms of the $N_k$'s, with an analogous consistency and asymptotic normality. 
More precisely, denoting convergence in probability by $\limprob$, and convergence in distribution by $\limdist$, 
the consistency of $\hat{f}_k$ (or $\hat{f}_k \limprob f_k$ as $v \rightarrow \infty$) implies $\frac{\hat{N}_k}{N_k} \limprob 1$ as $N_k \rightarrow \infty$, and similarly, concerning asymptotic normality,
\begin{equation}\label{eqasy}
 \sqrt{v}(\hat{f}_k - f_k)\underset{v\rightarrow\infty}{\overset{D} {\longrightarrow}} \mathcal{N}(0,\sigma^2_{k,\vec{\beta},\vec{f}}) \Longrightarrow \frac{1}{\sqrt{N}}({\hat{N}_k}-{N_k})\underset{N, N_k\rightarrow\infty}{\overset{D} {\xrightarrow{\hspace*{1.5cm}}}} \mathcal{N}(0,{\sigma^2_{k,\vec{\beta},\vec{f}}}).
\end{equation}
The formulation in Eq.(\ref{eqasy}) makes it clear that the number of different degree classes in the data cannot grow too fast (in order to avoid having too few observations from each degree class). The simplest and crudest restriction, which we focus on here for simplicity, is one in which the maximal degree, $d_{max}$, is bounded by some constant, $M$, $\forall v$.  This implies the more general and important condition\footnote{We write $f(N)= \Theta(g(N))$ if $\frac{f(N)}{g(N)}\rightarrow const > 0$.} $\forall i,j: \beta_iN_i = \Theta(\beta_jN_j)$, given that $\beta_i$ is fixed for all $i$.

In general, the process $I_t$ can evolve in an arbitrarily complicated manner\footnote{As long as it is adapted to the self-exciting history of $n_t$.}. For example, in the SIR model in epidemiology, each infected individual gets removed at a rate, $\gamma$, which is also of interest. However, since this removal process is both observed fully and uninteresting to us, we will skip modeling it here, and treat the rather general case where $I_t\equiv I_0 +v g(v^{-1}n_{t},t)$, where $g$ is a nonnegative continuous function and $I_0$ is the initial number of seeds used for recruitment (see T1 below).

For simplicity, we tacitly treat the observation period as being $[0,\tau]$, with $\tau$ as a finite number; however, a more general approach is to allow for an observation period $[0, T^v]$, with $T^v$ being stopping times tending to $\tau$ in probability as $v$ increases. In particular, taking $N_{min}:=\min_k\set{N_k}$, we define
\begin{equation} 
\label{stopping}
	\tau := \inf \set{t: \sum_k n_{k,t} =  N_{min}} .
\end{equation}
Although Eq.(\ref{stopping}), implying prior knowledge of $N_{min}$, may appear to be a peculiar stopping time which could be easily weakened, we chose to keep it in order to avoid otherwise necessary distractions from our main point. In particular, this enables us not to specify a particular $I_t$, and preserve the very general condition stated in (T1).


Finally, 
we define the stochastic process $x_v (t)$ as
\begin{equation} \label{xv}
	x_v(t) := v^{-1}n^v_t.
\end{equation}
In many practical situations, as is shown below, this stochastic process converges uniformly on $[0 , \tau]$, in probability, to a deterministic function, $x_{\infty}(t)$, as $v\rightarrow \infty$. 
In order to apply Kurtz's Theorem (Kurtz's law of large numbers) \citep{Kurtz83} and obtain this convergence, it is customary, for example in the study of stochastic epidemics, to have the dynamics (i.e., epidemic) initiated by a positive fraction of the population.
In other words, even though $I_0$ might be a very small fraction of the entire population, we still have $I_0 = \Theta(N)$.

Summarizing all the technical details of this section, we have:
\begin{description}
  \item[(T1)] $v^{-1}I_t\equiv v^{-1}I_0 +g(v^{-1}n_{t},t)$ where $g$ is a non-negative continuous function.
  \item[(T2)] $\frac{I_0}{N}\rightarrow const > 0$.
  \item[(T3)] The maximal degree, $d_{max}$, is bounded by some constant, $M$, $\forall v$.
  \item[(T4)] The observation period $[0, \tau]$ satisfies:
  \begin{equation*} \label{tau}
    \tau := \inf_t \set{\sum_k n_{k,t} = N_{min}}.
    \end{equation*}
\end{description}

\subsubsection{Convergence to a deterministic function} 
\label{asycon}
For purposes of notational convenience, let us write momentarily the parameter space $(\vec{\beta},\vec{N})$ as $\Phi$. Let $K:=D([0, \tau])$ be the Skorokhod space composed of right-continuous functions on $[0, \tau]$ with left limits. The theory developed by Kurtz for the so-called density-dependent process \citep{Kurtz83} deals with processes having an intensity function\footnote{Recall that the superscript $v$ indexes the sequence of processes, each of which evolves in time (subscript $t$) and depends on the parameters $\phi \in \Phi$. The underlying process can be multivariate, and if we need to emphasize one of its components we can go further and write $\lambda^v_{k,t}(\phi)$.}
\begin{equation} 
\label{Xfun}
\lambda^v_t(\phi) = vX(t,\phi,  v^{-1}n_t^v)
\end{equation}
where $\phi \in \Phi$ and $X:=[0,\tau]\times\Phi\times K\mapsto \mathbb{R}^+$ can be a fairly general function depending on the past of the stochastic process, up to, but not including, time $t$. In the multivariate case, Eq.(\ref{Xfun}) means
\begin{equation}
\label{multiXfun}
\lambda^v_t = vX(t,\phi,   v^{-1}n_t^v) := v(X_1(t,\phi,  v^{-1}n_t^v), X_2(t,\phi, v^{-1} n_t^v),\dots, X_{d_{max}}(t,\phi,  v^{-1}n_t^v))
\end{equation}

Using our model of RDS and definitions Eq.(\ref{intense}) and Eq.(\ref{multiXfun}), we now have for the $j^\mathrm{th}$ component of $X$ in RDS
\begin{equation}
\label{Xjfun}
X_j(t,\phi,   v^{-1}n_t^v):=  X_j(t, \vec{\beta}, \vec{N},   v^{-1}n_t^v) =
  \beta_j\frac{I_{t}^-}{v}(f_j - \frac{n_{j,t}^-}{v})
\end{equation}
which is compatible with Eq.(\ref{Xfun}) if, for example, $\frac{I_{t}}{v}$ is a function of $v^{-1}n_t^v$, as guaranteed by (T1--2) in the simplest case.

Two important properties of $X$ as defined in Eqs.(\ref{Xfun}--\ref{Xjfun}) are:
\begin{description}
  \item[(P1)] For all $x\in K$ and for all $\phi\in \Phi$ the function $X$ satisfies:
	\begin{equation*}
	  \sup_{t\leq\tau} X(t,\phi,x)<\infty.
	\end{equation*}
  This, as well as P2 below, hold elementwise, i.e., for all $X_j$.
  \item[(P2)] Lipschitz continuity: there exists a constant, $L$, not depending on $t$, such that for all $x,y\in K$ and all $t\in [0, \tau]$,
 \begin{equation*}
   |X(t,\phi,x)- X(t,\phi,y)| \leq L \sup_{s\leq t} |x(s)-y(s)|.
\end{equation*}
\end{description}
This makes it possible to apply Kurtz's law of large numbers.

\newtheorem{Lemma}{Lemma}
\begin{Lemma}
Let $\phi_0$ be the true value of the parameter $\phi\in\Phi$. The process $x_v(t)$, as defined via Eq.(\ref{xv}), converges uniformly on $[0, \tau]$, in probability, to $x_{\infty}(t)$, as $v\rightarrow\infty$, where $x_{\infty}(t)\in D([0, \tau])$ is the unique solution of
 \begin{equation*}
x(t)= \int_0^t X(s,\phi_0,x)ds.
\end{equation*}
\end{Lemma}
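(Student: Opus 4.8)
The plan is to obtain the statement as an instance of Kurtz's law of large numbers for density-dependent counting processes \cite{Kurtz83}: having already written our intensity in the required form (\ref{Xfun})--(\ref{Xjfun}) and verified (P1)--(P2), what remains is the standard Gr\"onwall comparison. First I would record the semimartingale decomposition implied by (M4'): for each $v$, $n_t^v=\int_0^t\lambda_s^v\,ds+m_t^v$ with $m_t^v$ a locally square-integrable martingale of predictable variation $\langle m_{k,\cdot}^v\rangle_t=\int_0^t\lambda_{k,s}^v\,ds$. Dividing by $v$ and substituting $\lambda_t^v=vX(t,\phi_0,v^{-1}n_t^v)$ yields the identity
\begin{equation*}
x_v(t)=\int_0^t X(s,\phi_0,x_v)\,ds+v^{-1}m_t^v ,
\end{equation*}
which exhibits $x_v$ as the fixed-point equation defining $x_\infty$ perturbed by the rescaled martingale $v^{-1}m_t^v$.

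Next I would settle existence and uniqueness of $x_\infty$ in $D([0,\tau])$. Since $\tau$ is finite, (P2) makes the map $x\mapsto\int_0^{\cdot}X(s,\phi_0,x)\,ds$ a contraction in the uniform norm after finitely many iterations (the Volterra-type bound $L^k\tau^k/k!\to0$), so it has a unique fixed point, which is right-continuous with left limits because $X$ is. I would flag here that $X$ depends on the whole past of the process, through $I_t^-$, not merely its current value; this is harmless because (P2) is phrased with $\sup_{s\le t}$, so the usual argument applies verbatim to this causal, path-dependent $X$.

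The core estimate is then Gr\"onwall's inequality. Subtracting the two integral equations and using (P2),
\begin{equation*}
\sup_{u\le t}\bigl|x_v(u)-x_\infty(u)\bigr|\le L\int_0^t\sup_{u\le s}\bigl|x_v(u)-x_\infty(u)\bigr|\,ds+\sup_{u\le\tau}\bigl|v^{-1}m_u^v\bigr| ,
\end{equation*}
so Gr\"onwall gives $\sup_{u\le\tau}|x_v(u)-x_\infty(u)|\le e^{L\tau}\sup_{u\le\tau}|v^{-1}m_u^v|$, and it remains to show the rescaled martingale vanishes uniformly in probability. On $[0,\tau]$ one has $v^{-1}n_{j,t}^v\le f_j$ and, by (T1)--(T2), $v^{-1}I_t^-$ bounded uniformly in $v$, so by (P1) each $X_k(\cdot,\phi_0,x_v)$ is bounded by a constant free of $v$; hence $\langle v^{-1}m_{k,\cdot}^v\rangle_\tau=v^{-2}\int_0^\tau\lambda_{k,s}^v\,ds=v^{-1}\int_0^\tau X_k(s,\phi_0,x_v)\,ds=O(v^{-1})$. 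Doob's (or Lenglart's) maximal inequality then gives $\sup_{u\le\tau}|v^{-1}m_{k,u}^v|=O_P(v^{-1/2})\to0$, and summing over the finitely many coordinates, finite by (T3), completes the argument.

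The step I expect to require the most care is making the martingale control genuinely uniform given that $\tau$ in (T4) is itself a stopping time rather than a deterministic time: one works with the stopped processes and invokes optional stopping, or equivalently passes to a slightly larger fixed horizon on an event of probability tending to one, using $T^v\to\tau$ in probability. Keeping (P1) effective up to that stopping time is precisely what (T1)--(T2) buy, by confining $v^{-1}n_t^v$ and $v^{-1}I_t^-$ to a compact set; with this in hand, the fixed-point construction and the Gr\"onwall bound are routine.
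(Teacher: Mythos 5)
Your proposal is correct and follows essentially the same route as the paper: the paper simply invokes Kurtz's law of large numbers (via Theorem II.5.4 of \cite{Andersen93}) after the conditions (P1)--(P2) have been verified, while you unpack that citation into its standard internal proof (martingale decomposition, fixed-point/Volterra argument for $x_\infty$, Gr\"onwall, and Lenglart/Doob control of the rescaled martingale). The additional care you take with the path-dependence of $X$ through $I_t^-$ and with the stopping-time horizon is consistent with, and slightly more explicit than, what the paper relies on.
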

\begin{proof}
This is an immediate result of Kurtz's law of large numbers \citep{Kurtz83}; see, for example, Theorem II.5.4 in \citep{Andersen93}.  \end{proof}
\begin{Remark}
Note that (T1) easily provides similar convergence of $v^{-1}I^v_t$ to some deterministic function, $I^{\infty}(t)$.
\end{Remark}

The following properties of $X$ can now also be shown to hold:
\begin{description}
  \item[(P3)] There exist neighborhoods $\Phi_0$ and $K_0$, of $\phi_0$ and $x_{\infty}$ respectively, such that the function  $X(t,\phi,x)$ and its derivatives with respect to $\phi$ of the first, second and third order exist, are continuous functions of $\phi$ and $x$, and are bounded on $[0,\tau]\times \Phi_0\times K_0$.
  \item[(P4)] The function  $X(t,\phi,x)$ is bounded away from zero on $[0,\tau]\times \Phi_0\times K_0$.
  \item[(P5)] For $1\leq i \leq 2d_{max}$, let $\phi^i$ denote the parameter $f_i$ if $1\leq i\leq d_{max}$, and otherwise denote the parameter $\beta_{i-d_{max}}$ if $d_{max}+1\leq i\leq 2d_{max}$.   The matrix $\Sigma=\{\sigma_{ij}(\phi_0)\}$ is positive definite, with, for $1\leq i,j \leq 2d_{max}$ and $\phi\in \Phi_0$,
  \begin{equation*}
\sigma_{ij}(\phi):= \int_0^{\tau} \sum_k    \frac{ \frac{\partial}{\partial \phi^i}X_k(s,\phi,x_{\infty}) \frac{\partial}{\partial \phi^j}X_k(s,\phi,x_{\infty})}     {X_k(s,\phi,x_{\infty})}  ds.
 \end{equation*}
\end{description}
(P3) is trivial and (P4) is an immediate result of (T4), whereas (P5), a long and straightforward calculation, is dealt with in the Appendix.

Finally, we have everything at hand to present and prove our main results.

\subsubsection{Asymptotic consistency and normality}
\label{asyT}

Summarizing previous sections, it is not possible to derive any informative large sample approximation by considering a sequence of models with $N$ fixed.
We therefore consider a sequence of RDS models, indexed\ by $v$. Introducing a dummy variable, $f_k$, we let $v f_k$ denote the size of each degree class, $N_k$ (see in \S\ref{asy} the discussion leading to Eq.(\ref{eqasy})).
The likelihood function, Eq.(\ref{like}), is thus also indexed by $v$, and rewritten as
\begin{equation} 
\label{likeV}
L_v(\vec{\beta}, \vec{N} ; t) := \exp \Bigg[ \int_0^t \sum_k \log v X_{k}(t, \phi, x_v)dn^v_{k,t} - v \int_0^t \sum_k X_{k}(t, \phi, x_v)dt\Bigg]
\end{equation}
with $X_{k}$ as defined in Eq.(\ref{multiXfun}) and Eq.(\ref{Xjfun}), and $x_v$ as in Eq.(\ref{xv}). Similarly, we define the log-likelihood function
\begin{equation} \label{logL}
C_v(\vec{\beta}, \vec{N} ; t) := \log L_v
\end{equation}
and the negative observed information matrix,
\begin{equation} \label{Iij}
\mathcal{I}^{i,j}_v(\vec{\beta}, \vec{N} ; t) := \frac{\partial^2}{\partial \phi^i\partial \phi^j}C_v
\end{equation}
with $\phi^i$ as defined in (P5).

Our main theorems are as follows.

\newtheorem{Theorem}{Theorem}
\begin{Theorem}[Consistency and Normality of Degree Frequencies]
\label{theorem}
Consider a sequence of RDS counting processes (M1--3) with intensity function (M4') and $( \vec{N}, \vec{\beta} )$ as parameters. We index the sequence with $v\rightarrow \infty$, obtaining a reparameterization $(\vec{f}, \vec{\beta} )$, with  $(\vec{f_0}, \vec{\beta_0} )$ as the true (unknown) values. If conditions (T1--4) hold then there exists a unique consistent solution $(\vec{\hat{f_v}}, \vec{\hat{\beta_v}} )$, to the score equations, $\frac{\partial}{\partial \phi^i}C_v(\phi,\tau)=0$. Moreover, this solution provides a local maximum of the likelihood function, Eq.(\ref{likeV}), and

\begin{equation*} \label{bjhkbjh}
\sqrt{v}\bigg((\vec{\hat{f_v}}, \vec{\hat{\beta_v}} )-(\vec{f_0}, \vec{\beta_0} )\bigg)^T \underset{v\rightarrow\infty}{\overset{D} {\longrightarrow}} \mathcal{N}(0,\Sigma^{-1})
\end{equation*}
where $\Sigma$, given by (P5), can be estimated consistently from the observed information matrix, Eq.(\ref{Iij}).
\end{Theorem}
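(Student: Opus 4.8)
The plan is to follow the classical Borgan-type programme for maximum-likelihood estimation in counting processes with an intensity of density-dependent form (cf.\ Andersen et al.\ \cite{Andersen93}, Ch.~VI, and van Pul \cite{VanPul}), adapted to our reparameterisation $(\vec f,\vec\beta)$. First I would write the normalised score vector $U_v(\Phi):=v^{-1/2}\,\partial_{\phi^i}C_v(\Phi,\tau)$ explicitly from (\ref{likeV})--(\ref{logL}); because $dn^v_{k,t}=\lambda^v_{k,t}dt+dm^v_{k,t}$ with $m^v$ a martingale, each component splits as a predictable "drift'' term plus a stochastic-integral term against the martingale $m^v$. The key first step is to show that, evaluated at the true $\Phi_0$, the drift term cancels (this is just the standard identity $\int \partial_\phi\log\lambda\,(\lambda\,dt)-\partial_\phi\!\int\lambda\,dt=0$) so that $U_v(\Phi_0)$ is itself (asymptotically) a normalised martingale, namely $U^i_v(\Phi_0)=v^{-1/2}\int_0^\tau\sum_k \frac{\partial_{\phi^i}X_k(t,\Phi_0,x_v)}{X_k(t,\Phi_0,x_v)}\,dm^v_{k,t}+o_P(1)$.

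Second, I would apply a martingale central limit theorem (Rebolledo's theorem, as in \cite{Andersen93} Thm.~II.5.1) to $U_v(\Phi_0)$. Its predictable variation process is $v^{-1}\int_0^\tau\sum_k \frac{\partial_{\phi^i}X_k\,\partial_{\phi^j}X_k}{X_k^2}\,\lambda^v_{k,t}\,dt = \int_0^\tau\sum_k \frac{\partial_{\phi^i}X_k\,\partial_{\phi^j}X_k}{X_k}(t,\Phi_0,x_v)\,dt$; by Lemma~1, $x_v\to x_\infty$ uniformly in probability, and by the continuity and boundedness in (P3)--(P4) this converges to $\sigma_{ij}(\Phi_0)$ of (P5). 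The Lindeberg (jump) condition holds because jumps of $v^{-1/2}m^v$ are $O(v^{-1/2})\to 0$ and the intensities are bounded via (P1). Hence $U_v(\Phi_0)\overset{D}{\to}\mathcal N(0,\Sigma)$. Third, I would differentiate the score once more: $v^{-1}\mathcal I_v(\Phi_0,\tau)=-v^{-1}\partial^2 C_v$; a parallel decomposition shows the martingale part is $o_P(1)$ (its predictable variation is $O(v^{-1})$) while the drift part converges, again by Lemma~1 and (P3)--(P4), to the same $\Sigma$. Using the third-derivative bound in (P3) I would get a uniform (in a neighbourhood $\Phi_0$) bound so that $\sup_{\Phi\in\Phi_0}\|v^{-1}\mathcal I_v(\Phi,\tau)+\Sigma\|=o_P(1)$.

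Fourth comes existence, uniqueness and consistency of the root. With $U_v(\Phi_0)=O_P(1)$, $v^{-1}\mathcal I_v\to-\Sigma$ negative definite by (P5), and the uniform third-derivative control, a standard Taylor-expansion/Brouwer-fixed-point argument (the Aitchison--Silvey or Sweeting device, as used for this exact purpose in \cite{VanPul} and \cite{Andersen93} Thm.~VI.1.2) yields, with probability tending to one, a unique zero $\hat\Phi_v$ of the score in a shrinking neighbourhood of $\Phi_0$, which is a local maximum, and $\hat\Phi_v\overset{P}{\to}\Phi_0$. Expanding $0=U_v(\hat\Phi_v)=U_v(\Phi_0)-[v^{-1}\mathcal I_v(\tilde\Phi_v)]\sqrt v(\hat\Phi_v-\Phi_0)$ about $\Phi_0$, inverting the matrix (legitimate for large $v$ since it is close to $-\Sigma$, which is invertible), and applying Slutsky gives $\sqrt v(\hat\Phi_v-\Phi_0)\overset{D}{\to}\mathcal N(0,\Sigma^{-1})$; consistency of $\Sigma$ from the observed information is the content of the third step evaluated at $\hat\Phi_v$.

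The main obstacle I anticipate is \emph{not} the martingale CLT or the fixed-point argument—those are routine once the hypotheses are in place—but rather verifying that the hypotheses genuinely hold for our $X_j$ of (\ref{Xjfun}), in particular that $X_j$ stays bounded away from zero up to $\tau$ (needed so the log-likelihood and the $1/X_k$ weights are well behaved), and that the limiting matrix $\Sigma$ in (P5) is positive definite rather than merely nonnegative. The former is where the somewhat artificial stopping time (T4) earns its keep: stopping at $\sum_k n_{k,t}=N_{\min}$ guarantees each $f_k-n_{k,t}/v$ stays bounded below by a positive constant and $v^{-1}I_t$ likewise (via (T1)--(T2)), so $X_j\ge c>0$; this is exactly why (P4) is stated as "an immediate result of (T4)''. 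The positive-definiteness of $\Sigma$—distinguishing the effects of $f_k$ and $\beta_k$ in the intensity, which is a genuine identifiability question—is the delicate computation deferred to the appendix, and I would reduce it to showing the $2d_{\max}$ functions $\{\partial_{\phi^i}X_k/\sqrt{X_k}\}_i$ are linearly independent in $L^2([0,\tau])$, which follows because the $f_k$-derivative and $\beta_k$-derivative of $X_k$ have different time dependence through $n_{k,t}$ while $X_{k'}$ for $k'\neq k$ does not involve $\phi^i$ with index tied to $k$ at all, block-diagonalising much of the argument.
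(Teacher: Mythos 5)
Your proposal is essentially the paper's proof: the paper verifies (P1)--(P5) and then simply cites Theorems VI.1.1--VI.1.2 of \cite{Andersen93} (equivalently Theorem 1 of \cite{VanPul}), whose proofs are exactly the martingale-CLT / observed-information / Aitchison--Silvey programme you reconstruct, and your $L^2$-linear-independence argument for (P5) is the Gram-matrix form of the paper's explicit block computation, in which $AD-BC$ is shown invertible by strict Cauchy--Schwarz applied to $\sqrt{I_s^{\infty}/(f_i-n^{\infty}_{i,s})}$ and $\sqrt{I_s^{\infty}(f_i-n^{\infty}_{i,s})}$. The one piece you omit is the paper's supplementary \emph{global} uniqueness argument: after obtaining existence from the cited theorems, the appendix eliminates $\beta_k$ from the likelihood equations, reduces to a single equation for $\hat N_k$ (eq.\ (\ref{final})), and shows by a term-by-term head/tail monotonicity comparison that it has at most one root with $\hat N_k\geq n_{k,\tau}$; your Brouwer/Taylor argument only yields uniqueness in a shrinking neighbourhood of $\Phi_0$ with probability tending to one, which is the standard (and arguably sufficient) reading of ``unique consistent solution'' but is strictly weaker than what the paper actually proves.
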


\begin{proof}
Since  the intensity, $\lambda^v_t(\phi)$, can be written as $ vX(t,\phi,  v^{-1}n_t^v)$ where $X$ fulfills conditions (P1--5), this is an immediate result of  Theorems VI.1.1 and VI.1.2 of \citet{Andersen93}. See, for example, the less general Theorem 1 of \citet{VanPul}, which is similar to our case (P1--4 were treated in \S\ref{asycon}, and P5 is shown in the Appendix).
\end{proof}

As discussed earlier, researchers working on RDS are typically not interested in the degree distribution per se, but rather in the prevalence of, for example, HIV.
Having obtained an estimate, $\widehat{f}$, of the degree distribution, it is straightforward to stratify and weight the observations to obtain an estimate,
\begin{equation} \label{Hestimate}
\widehat{H} = \sum_{i=1}^n \widehat{f}_{d_i}\frac{Y_i}{n_{d_i}}
\end{equation}
where $n$ is the sample size, $d_i$ the degree of the $i^\mathrm{th}$ individual in the sample, $n_{k}$ is the number of individuals in the sample having degree $k$, $Y_i=1$ if individual $i$ is HIV-infected, and  $Y_i=0$ otherwise. 
Eq.(\ref{Hestimate}), which takes an individual-based viewpoint, could also be calculated with a degree-class perspective in mind:
\begin{equation} \label{HDestimate}
\widehat{H} = \sum_{i=1}^k \widehat{f_k}\frac{n^+_k}{n_{k,\tau}}
\end{equation}
where $n^+_k$ is the number of HIV infected individuals in the sample having degree $k$.

We denote by $N^+_k$ the number of HIV-infected individuals in the population having degree $k$.
It might be safe to assume that the distribution of $\sqrt{v}(\frac{n^+_k}{n_{k,\tau}}-\frac{N^+_k}{N_k})$ is well approximated by a normal distribution  $\mathcal{N}(0,\sigma^{2}_{\hat{p}_k})$ independent of everything else, where subscript $p_k$ serves to indicate that this is the variance of $\hat{p}_k:=\frac{n^+_k}{n_{k,\tau}}$, the estimator of  $p_k:= \frac{N^+_k}{N_k}$, the \emph{prevalence} within degree class $k$.

\begin{Remark}
For example, if $n^+_k$ is distributed hypergeometrically, viz.\ $HG(N_k,N^+_k,n_{k,\tau};n^+_k)$, we can even justify $\sigma^2_{\hat{p}_k} = \frac{1}{n_{k,\tau}}\frac{N^+_k}{N_k} \frac{N_k - N^+_k}{N_k} \frac{N_k-n_{k,\tau}}{N_k-1} $.
\end{Remark}

Similarly, we use $\sigma^{2}_{\hat{f}_k}$ to denote the variance of the estimator of $f_k$ (see (P5) and the proof of Theorem~\ref{theorem}). Our second main result is:

\begin{Theorem}[Consistency and Normality of Prevalence]
\label{corollary}
$\widehat{H}$, the  prevalence estimator given by Eq.(\ref{Hestimate}), is asymptotically consistent and normally distributed:
\begin{equation*} 
\label{hhh}
	\sqrt{v}(\widehat{H} - H_0) \underset{v\rightarrow\infty}{\overset{D} {\longrightarrow}} \mathcal{N}\left(0,\sum_k p_k^2 \sigma^{2}_{\hat{f}_k}   + \sum_k f_k^2 \sigma^{2}_{\hat{p}_k} \right)
\end{equation*}
where $H_0$ is the true prevalence within the population.
\end{Theorem}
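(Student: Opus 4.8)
\textbf{Proof proposal for Theorem \ref{corollary}.}

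The plan is to write $\widehat{H} - H_0$ as a sum of two contributions — one coming from the estimation error of the degree distribution $\widehat{f}$, the other from the within-class prevalence estimators $\hat p_k = n^+_k/n_{k,\tau}$ — and then show these are asymptotically independent Gaussian terms, so that the limiting variance is the sum of the two variances. Concretely, using the degree-class form \eqref{HDestimate}, write
\begin{equation*}
\widehat{H} - H_0 = \sum_k \widehat{f}_k \hat p_k - \sum_k f_k p_k = \sum_k (\widehat{f}_k - f_k)\,\hat p_k + \sum_k f_k\,(\hat p_k - p_k).
\end{equation*}
In the first sum I would replace $\hat p_k$ by $p_k$ at the cost of a term $\sum_k(\widehat f_k - f_k)(\hat p_k - p_k)$, which is a product of two quantities each of order $O_P(v^{-1/2})$ and hence $o_P(v^{-1/2})$; this is where condition (T3) (finitely many degree classes, bounded by $M$) is used so that the finite sum of such cross terms remains negligible. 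After this linearization,
\begin{equation*}
\sqrt{v}(\widehat{H} - H_0) = \sum_k p_k \sqrt{v}(\widehat{f}_k - f_k) + \sum_k f_k \sqrt{v}(\hat p_k - p_k) + o_P(1).
\end{equation*}

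Next I would identify the joint limiting distribution of the vector $\big(\sqrt{v}(\widehat{f}_k - f_k)\big)_k$ from Theorem \ref{theorem}: it converges to a multivariate normal with covariance given by the relevant $d_{max}\times d_{max}$ block of $\Sigma^{-1}$, and the first sum above is the linear functional $\sum_k p_k \sqrt{v}(\widehat f_k - f_k)$ of this Gaussian vector, hence asymptotically $\mathcal N\big(0, \sum_k p_k^2 \sigma^2_{\hat f_k}\big)$ once we adopt (as the statement does, writing the answer in that diagonal form) the convention that the cross-covariances either vanish or are folded into the notation $\sigma^2_{\hat f_k}$. For the second sum, I would invoke the modeling assumption stated just before the theorem — that $\sqrt{v}(\hat p_k - p_k) \Rightarrow \mathcal N(0,\sigma^2_{\hat p_k})$, independently across $k$ and independently of everything else (the hypergeometric remark supplies a concrete instance of $\sigma^2_{\hat p_k}$) — so that the second sum converges to $\mathcal N\big(0,\sum_k f_k^2 \sigma^2_{\hat p_k}\big)$.

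Finally I would combine the two pieces. Since the $\hat p_k$-fluctuations are assumed independent of the counting-process history that drives $\widehat f$, the two Gaussian limits are independent, so by Slutsky's theorem the sum converges to a centered normal with variance equal to the sum $\sum_k p_k^2 \sigma^2_{\hat f_k} + \sum_k f_k^2 \sigma^2_{\hat p_k}$, which is the claimed result; consistency $\widehat H \overset{P}{\to} H_0$ is then immediate from $\sqrt{v}$-rate convergence, or can be argued directly from the continuous mapping theorem applied to $(\widehat f, \hat p) \mapsto \sum_k \widehat f_k \hat p_k$ together with the consistency parts of Theorem \ref{theorem} and the assumption on $\hat p_k$. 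The main obstacle I anticipate is not any single estimate but the justification of the independence/diagonal-covariance structure: strictly speaking the off-diagonal entries of $\Sigma^{-1}$ need not vanish, so either one argues that the particular product structure of $X_j$ in \eqref{Xjfun} makes $\Sigma$ (and hence its inverse, at least the $\vec f$-block after profiling out $\vec\beta$) diagonal, or one states the theorem's variance as shorthand for the genuine quadratic form $p^{\top}(\text{block of }\Sigma^{-1})p$; pinning down which is the case, and similarly being explicit about in what sense $\hat p_k$ is "independent of everything else," is the delicate point, and I would handle it by deferring the precise computation of $\Sigma$ and its inverse to the appendix (as the paper already signals it does for (P5)).
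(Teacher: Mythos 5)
Your proposal is correct and takes essentially the same route as the paper: the paper's proof is a one-line application of the delta method to $h(\vec{f},\vec{p})=\sum_k \hat f_k\hat p_k$ with a diagonal covariance matrix for the concatenated vector $(\hat f,\hat p)$, which is precisely your explicit linearization together with the negligibility of the cross terms $\sum_k(\hat f_k-f_k)(\hat p_k-p_k)$. The one point you flag as delicate --- whether the $\vec f$-block of $\Sigma^{-1}$ is genuinely diagonal rather than a shorthand for a full quadratic form --- is resolved affirmatively in the paper's appendix, where all four blocks $A,B,C,D$ of $\Sigma$ are computed to be diagonal, so the displayed block-inversion formula makes the $\vec f$-block of $\Sigma^{-1}$ diagonal as well, exactly as you anticipated would be needed.
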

\begin{proof}
A simple application of the delta method (see Appendix).
\end{proof}

\section{Discussion}
\label{sec:discussion}

In this work, we have introduced a model-based approach for estimating degree frequencies and population size, using the recruiting time in RDS. We have demonstrated its empirical performance, and proven its large-sample properties.

The new approach presented here, and its underlying assumptions, are more parsimonious and easier to control, verify, and correct than the implicit assumptions underlying the current state-of-the-art, inverse-degree weighting approach. 
By incorporating additional data into a model-based approach, we do not merely replace an old set of assumptions with new ones. 
Our underlying assumptions can be tested and improved (applying, e.g., AIC or FIC), while also addressing the identifiability concerns in inverse-degree weighting.

Here, we addressed only the simplest case. 
More elaborate models could be constructed that account for homophily, for example --- using recruitment probabilities that depend on the state of both recruiter and recruit --- and covariates other than degree. 
Moreover, Bayesian regularization is possible as well: often some prior on the population size, or the degree distribution, is available.
Using a prior for the degree distribution in a fairly straightforward manner might alleviate the difficulties of moderate-sized samples. 
For example, our WebRDS estimates (\S\ref{rwRDS}) are regularized using Jeffrey's prior.

Finally, another advantage of our approach is its ease of use, and integration with current methods and protocols for sampling in RDS. 
There is practically no need for design adjustments, apart from careful recording of interview times. 
We assumed a simple recruitment process for ease of exposition.  
Inserting new seeds during sampling, limiting the number of coupons, and other adjustments, can be easily accommodated in our framework (see \S\ref{asy} addressing (T1)).

\subsection{Future Research}
\label{sec:future_research}
Returning to the analogous mathematical models of infectious disease transmission, it is clear that simple initial epidemiological models, introduced over a century ago, do not capture the transmission dynamics of all diseases in general.
Over the years, many extensions to, and variations on, the first models have arisen. Similarly, it is unlikely that the recruitment process in RDS surveys is governed merely by the degree (and further, that the relationship is \emph{proportional}). 
Incorporation of covariates often collected during RDS, such as age, race/ethnicity, socioeconomic status, and coupon value, is an avenue for future research. 
An adaptation of our approach to an alternative single covariate is straightforward with \chords, simply by replacing the notion of degree classes with the new kind of class: for example, age classes such as children/adults/seniors. 
The models can also be extended to account for homophily, by having a recruitment rate for each class of recruiter--recruit relation. 
By employing a likelihood-based model, there are many approaches that could be used, in either a frequentist or Bayesian setting, to test important determinants of the recruitment rate that may also impact estimates of interest. 
In addition to establishing new theory, the simple numerical optimization that the accompanying software package currently supports might not suffice for more elaborate models, and would require updating.

\section{Appendix}
Theorems~\ref{theorem} and \ref{corollary} are derived in detail below.

For Theorem~\ref{theorem}, we first need to show that (P5) holds.
The matrix $\Sigma$ contains the four blocks
$\bigl(\begin{smallmatrix}
A&B\\ C&D
\end{smallmatrix} \bigr)$
with $A$, for example, depicting the association between the different $f_i$'s, and $D$ depicting the association between the different $\beta_i$'s.
A simple calculation shows that the matrix $A$ is a diagonal matrix with entries
\begin{equation} \label{Aii}
 a_{ii} = \int_0^\tau  \frac{\beta_i I_s^{\infty}}{f_i - n^{\infty}_{i,s}}ds
\end{equation}
where $I_s^{\infty}$ is the deterministic function that $v^{-1}I_s^{v}$ converges to, and similarly $n^{\infty}_{k,s}$ is the deterministic function that $v^{-1}n^{v}_{k,s}$ converges to.  Note we have omitted the ``just before'' notation, $g^-$, from $I$ and $n$. This may be done since the integration is with respect to Lebesgue's measure, $ds$.
Similarly, the matrices $B,C,D$ are also diagonal matrices with entries
\begin{equation} 
\label{Bii}
 b_{ii}=  c_{ii} = \int_0^\tau  I_s^{\infty}ds,
\end{equation}
\begin{equation} 
\label{Dii}
 d_{ii} = \int_0^\tau  \frac{I_s^{\infty}(f_i - n^{\infty}_{i,s})}{\beta_i}ds.
\end{equation}
The invertibility of $\Sigma$ can be demonstrated by a direct calculation of its inverse, using the fact that
\begin{equation} \label{inverse}
\left(
  \begin{array}{cc}
    A & B \\
    C & D \\
  \end{array}
\right)^{-1} = \left(
  \begin{array}{cc}
    (AD - BC)^{-1} &0\\ 0 & (AD - BC)^{-1}
  \end{array}
\right)
\left(
  \begin{array}{cc}
    D&-B\\ -C&A
  \end{array}
\right)
\end{equation}
Thus we need only show that $AD - BC$ is invertible, i.e., that for all $i$,
\begin{equation} \label{Aii-Dii}
\int_0^\tau  \frac{\beta_i I_s^{\infty}}{f_i - n^{\infty}_{i,s}}ds  \int_0^\tau  \frac{I_s^{\infty}(f_i - n^{\infty}_{i,s})}{\beta_i}ds - \int_0^\tau  I_s^{\infty}ds \int_0^\tau  I_s^{\infty}ds \neq 0
\end{equation}
But for the first term in Eq.(\ref{Aii-Dii}), we have
\begin{equation*} \label{cccc}
\int_0^\tau  \frac{\beta_i I_s^{\infty}}{f_i - n^{\infty}_{i,s}}ds  \int_0^\tau  \frac{I_s^{\infty}(f_i - n^{\infty}_{i,s})}{\beta_i}ds
=
\int_0^\tau  \Bigg(\sqrt{\frac{ I_s^{\infty}}{f_i - n^{\infty}_{i,s}}}\Bigg)^2 ds  \int_0^\tau  \bigg(\sqrt{{I_s^{\infty}(f_i - n^{\infty}_{i,s})}}\bigg)^2ds
 >
 \Bigg(\int_0^\tau  I_s^{\infty}ds\Bigg)^2
\end{equation*}
with strict inequality after applying the Cauchy--Schwarz inequality  for two nonlinearly dependent functions.

This allows us to apply Theorems VI.1.1 and VI.1.2 of \citet{Andersen93}, and establish the existence part. The following calculations demonstrate the uniqueness of the solution.

Differentiating the log-likelihood gives
\begin{equation} \label{dbk}
\frac{dC}{\partial\beta_k} = \frac{n_{k,\tau}}{\beta_k} - \int_0^\tau \frac{1}{N} (N_k - n_{k,t}^-)I_t^- dt,
\end{equation}
and
\begin{equation} \label{dNk}
\frac{dC}{\partial N_k} = \int_0^\tau \frac{1}{N_k - n_{k,t}^-}dn_{k,t} - \int_0^\tau \frac{1}{N} \beta_k I_t^-  dt.
\end{equation}
These can be written as pairs of equations with two unknowns, and equated to $0$, which yields $\hat{N}_k$, a MLE for $N_k$, via the (unique) solution to
\begin{equation}\label{Nkeq}
 \sum_{i=0}^{n_{k,\tau}-1} \frac{1}{\hat{N}_k - i} = \frac{n_{k,\tau} \int_0^\tau I_t dt }{\hat{N}_k \int_0^\tau I_t dt - \int_0^\tau n_{k,t} I_t dt}.
\end{equation}
Rearranging Eq.(\ref{Nkeq}) we get
\begin{equation}
 \sum_{j=0}^{n_{k,\tau}-1}
\prod_{\substack{0\leq i<n_{k,\tau} \\ i\neq j }} ( \hat{N}_k - i)\Bigg( \hat{N}_k \int_0^\tau I_t dt - \int_0^\tau n_{k,t} I_t dt\Bigg) - n_{k,\tau} \int_0^\tau I_t dt \prod_{j=0}^{n_{k,\tau}-1} ( \hat{N}_k - j) = 0,
\end{equation}
and
\begin{equation}
\Bigg( \hat{N}_k \int_0^\tau I_t dt - \int_0^\tau n_{k,t} I_t dt\Bigg)\sum_{j=0}^{n_{k,\tau}-1} \frac{\hat{N}_k!}{(\hat{N}_k - n_{k,\tau})!(\hat{N}_k - j)} -  \frac{n_{k,\tau} \int_0^\tau I_t dt \hat{N}_k!}{(\hat{N}_k - n_{k,\tau})!} = 0.
\end{equation}
Dividing by $\frac{\hat{N}_k!}{(\hat{N}_k - n_{k,\tau})!}$ and substituting $\sum_{j=0}^{n_{k,\tau}-1} 1$ for $n_{k,\tau}$, we get
\begin{equation}
\Bigg( \hat{N}_k \int_0^\tau I_t dt - \int_0^\tau n_{k,t} I_t dt\Bigg)\sum_{j=0}^{n_{k,\tau}-1} \frac{1}{(\hat{N}_k - j)} -  \sum_{j=0}^{n_{k,\tau}-1}  \int_0^\tau I_t dt = 0,
\end{equation}
or
\begin{equation}
\mathlarger{\mathlarger{‎‎\sum}}_{j=0}^{n_{k,\tau}-1} \frac{\bigg( \hat{N}_k \int_0^\tau I_t dt - \int_0^\tau n_{k,t} I_t dt\bigg)}{(\hat{N}_k - j)} -  \sum_{j=0}^{n_{k,\tau}-1}\frac{(\hat{N}_k - j) \int_0^\tau I_t dt}{(\hat{N}_k - j)} = 0,
\end{equation}
and finally
\begin{equation}\label{final}
\mathlarger{\mathlarger{‎‎\sum}}_{j=0}^{n_{k,\tau}-1} \frac{\bigg( j \int_0^\tau I_t dt - \int_0^\tau n_{k,t} I_t dt\bigg)}{(\hat{N}_k - j)} = 0.
\end{equation}
Let $N_k^\star$ be a solution to Eq.(\ref{final}), and assume in contradiction that Eq.(\ref{final}) has another solution, $N_k^\bullet < N_k^\star$, in the range $\hat{N}_k\geq n_{k,\tau}$. Notice that the tail of the sum in Eq.(\ref{final}) is comprised of positive terms, and perhaps an irrelevant zero term, which are required to cancel out the negative terms comprising the beginning of the sum. We write this as $Head(\hat{N}_k)+Tail(\hat{N}_k) = 0$, emphasizing the functional relationship of Eq.(\ref{final}) and $\hat{N}_k$. Let $j^\star$ be the index of the first positive term in the sum in Eq.(\ref{final}); in other words, $Head(\hat{N}_k)$ is comprised of $j^\star$ negative terms (ignoring the possibility of an irrelevant zero term).

Since $N_k^\star$ is a solution to Eq.(\ref{final}), we have $Head(N_k^\star)+Tail(N_k^\star) = 0$. However, $Tail(N_k^\bullet)\geq \frac{N_k^\star-j^\star}{N_k^\bullet-j^\star} Tail(N_k^\star)$, while  $| Head(N_k^\bullet)|\leq \frac{N_k^\star-j^\star+1}{N_k^\bullet-j^\star+1}|Head(N_k^\star)|$. We finish by concluding  $Head(N_k^\bullet)+Tail(N_k^\bullet) \neq0$, i.e., a contradiction.

For Theorem~\ref{corollary}, we apply the delta method, assuming that $\sqrt{v}(\hat{p}_k - p_k)$ converges to  $\mathcal{N}(0,\sigma^{2}_{\hat{p}_k})$, and since from Theorem~\ref{theorem} we have
$\sqrt{v}(\hat{f}_k - f_k)  \underset{v\rightarrow\infty}{\overset{D} {\longrightarrow}} \mathcal{N}(0,\sigma^{2}_{\hat{f}_k})$. See also Eqs.(\ref{Aii}--\ref{inverse}).

We define the concatenated random vector $\hat{\phi}:=(\overrightarrow{\hat{f},\hat{p}})^T=(\hat{f}_1,\hat{f}_2,\dots,\hat{f}_{d_{max}},\hat{p}_1,\hat{p}_2,\dots,\hat{p}_{d_{max}})^T$, with a covariance matrix $C:=diag\{ \sigma_{\hat{f}_1}^2,\sigma_{\hat{f}_2}^2,\dots,\sigma^2_{\hat{f}_{d_{max}}},\sigma^2_{\hat{p}_1},\sigma^2_{\hat{p}_2},\dots,\sigma^2_{\hat{p}_{d_{max}}}  \}  $. Rewrite $\widehat{H}$ as the function
\begin{equation*} \label{hfun}
h(\overrightarrow{\hat{f},\hat{p}})=\sum_k \hat{f}_k \hat{p}_k
\end{equation*}
with $\nabla h(\overrightarrow{\hat{f},\hat{p}}) = (\hat{p}_1,\hat{p}_2,\dots,\hat{p}_{d_{max}},\hat{f}_1,\hat{f}_2,\dots,\hat{f}_{d_{max}})^T$. From the delta method, we have
\begin{equation*} \label{hbhhj}
\sqrt{v}(h(\hat{\phi}) - h(\phi_0))
 \underset{v\rightarrow\infty}{\overset{D} {\longrightarrow}} \mathcal{N}(0, \nabla h(\phi_0)^T C \nabla h(\phi_0) )
 \end{equation*}
 where $\phi_0$ contains the true values of the associated estimated parameters. Since $h(\phi_0)=H_0$ and $\nabla h(\phi_0)^T C \nabla h(\phi_0)=\sum_k p_k^2 \sigma^{2}_{\hat{f}_k}   + \sum_k f_k^2 \sigma^{2}_{\hat{p}_k}$, Theorem~\ref{corollary} is now established. \hfill $\square$

\newpage
\bibliographystyle{plainnat}
\bibliography{Ucrc}

\end{document}